\newcolumntype{P}[1]{>{\centering\arraybackslash}p{#1}}
\newtheorem*{theorem*}{Theorem}
\theoremstyle{definition}
\theoremstyle{remark}
\newtheorem*{example*}{Example}
\newtheorem*{algorithm*}{Algorithms}
\begin{document}


\title{Passivity constraints on the relations between \\ transmission, reflection, and absorption eigenvalues}

\author{Cheng Guo}
\email{guocheng@stanford.edu}
\affiliation{Ginzton Laboratory and Department of Electrical Engineering, Stanford University, Stanford, California 94305, USA}

\author{Shanhui Fan}
\email{shanhui@stanford.edu}
\affiliation{Ginzton Laboratory and Department of Electrical Engineering, Stanford University, Stanford, California 94305, USA}%

\date{\today}

\begin{abstract}
We investigate the passivity constraints on the relations between transmission, reflection, and absorption eigenvalues in linear time-invariant systems. Using techniques from matrix analysis, we derive necessary and sufficient conditions for the permissible combinations of these eigenvalues. Our analysis reveals that the set of allowable eigenvalue combinations forms a convex polyhedron in eigenvalue space, characterized by a trace equality and a set of linear inequalities. Surprisingly, we uncover a direct connection between this physical problem and Alfred Horn's inequalities, a fundamental result in matrix theory. We provide explicit examples for systems with varying numbers of input ports, demonstrating the increasing complexity of the constraints as system size grows. We apply our theory to analyze the implications of important phenomena, including open and closed channels, coherent perfect reflection and reflectionless scattering modes, and coherent perfect absorption and coherent zero absorption. Our findings not only offer a complete characterization of passivity constraints on wave transport eigenvalues but also establish an unexpected bridge between fundamental wave physics and advanced matrix theory, opening new avenues for research at their intersection. These results have significant implications for the design and optimization of passive wave devices across a wide range of applications in optics, acoustics, and mesoscopic physics.
\end{abstract}

\maketitle


\section{Introduction}\label{sec:introduction}

Passivity is a fundamental property of many physical systems, characterizing their inability to generate energy~\cite{foster1924,meixner1954,wohlers1965,willems1972,willems1972a,youla1959a,boyd1982,zemanian1987,zemanian1996,nedic2020,srivastava2021}. This intuitive concept imposes significant constraints on system behavior~\cite{bernland2011}, with far-reaching implications in physics and engineering~\cite{chao2022}. For linear time-invariant systems, passivity implies causality~\cite{guttinger1966,wohlers1969} and places restrictions on the system's transfer function~\cite{wohlers1969}. These restrictions can be related to Herglotz functions~\cite{aronszajn1956,zemanian1987, gesztesy2000, nedic2020}, which are analytic functions mapping the upper half-plane into itself~\cite{titchmarsh1948,kac1974r}. The integral identities of Herglotz functions yield physical constraints in the form of dispersion relations~\cite{kronig1926,kramers1927diffusion,gorter1936,toll1956,sharnoff1964,guttinger1966,hu1989,king2006,king2009a,king2009b,nussenzveig2012} and sum rules~\cite{gustafsson2010}. 

Understanding these passivity constraints is crucial for both theoretical and practical applications~\cite{chao2022}. In theoretical physics, passivity constraints provide insights into fundamental phenomena such as constitutive relations~\cite{altarelli1972,landau2008,shim2021}, extinction~\cite{sohl2007,shim2019,kuang2020}, scattering~\cite{sohl2007,bernland2011a,miller2016,kuang2020,kuang2020a,molesky2020,zhang2023g}, absorption~\cite{purcell1969,miller2016,ivanenko2019}, local density of states~\cite{scheel2008,miller2016}, spontaneous emission~\cite{barnett1996,scheel2008}, radiative heat transfer~\cite{molesky2020a,zhang2023g}, and the speed-of-light limitation~\cite{glasgow2001,yaghjian2007,welters2014}. In engineering applications, these constraints guide the design of various devices and systems, including matching networks~\cite{fano1950,guillemin1957}, radar absorbers~\cite{purcell1969,rozanov2000,nordebo2014}, filters~\cite{youla2016}, multiplexers~\cite{youla2016}, passive optical networks~\cite{lam2007,kramer2005,li2017h,faruk2021}, high-impedance surfaces~\cite{brewitt-taylor2007}, passive metamaterials~\cite{gustafsson2010}, antennas~\cite{gustafsson2007,doane2013,jonsson2013}, cloaks~\cite{fleury2014a,fleury2015,monticone2016,cassier2017a,jelinek2021,strekha2024}, and periodic structures~\cite{gustafsson2012}.

In the context of wave transport, transmission, reflection, and absorption are fundamental processes~\cite{datta1995,bohren2004,chen2005,carminati2021}. These phenomena are characterized by their respective eigenvalues, which determine a wide variety of transport properties~\cite{beenakker1997,rotter2017}. For an $n$-port linear time-invariant system described by field transmission and reflection matrices $t$ and $r$, the transmission, reflection, and absorption eigenvalues are defined as the eigenvalues of $T = t^{\dagger}t$, $R = r^{\dagger}r$, and $A = I - t^{\dagger}t - r^{\dagger}r$, respectively~\cite{beenakker1997,beenakker1998,rotter2017,cao2022}.

These transport eigenvalues play a central role in various physical phenomena and applications. In mesoscopic electron transport~\cite{berkovits1994,vanrossum1999,datta1995}, the transmission process can be viewed as parallel transmission over orthogonal eigenchannels~\cite{landauer1970,datta2013}. The universal bimodal distribution of transmission eigenvalues in lossless diffusive systems~\cite{dorokhov1984,pichard1986,muttalib1987,mello1989,nazarov1994} leads to celebrated phenomena such as open and closed channels~\cite{dorokhov1984,imry1986,deboer1994,vellekoop2008}, universal conductance fluctuations~\cite{lee1985b,imry1986}, and quantum shot noise~\cite{beenakker1992,altshuler1994}. The disorder dependence of these eigenvalues underpins fundamental phenomena like weak localization~\cite{barabanenkov1973,kuga1984,albada1985,wolf1985} and Anderson localization~\cite{anderson1958,abrahams1979,altland2005}. 

In classical wave systems, absorption significantly affects the distribution of transmission~\cite{brouwer1998} and reflection eigenvalues~\cite{bruce1996,beenakker1996,goetschy2013}. Recent advances in wave-front shaping techniques~\cite{vellekoop2007,yu2017e,reck1994,morizur2010} have enabled experimental access to transport eigenchannels, leading to the realization of novel phenomena such as coherent enhancement or suppression of transmission~\cite{vellekoop2008,popoff2010,aulbach2011,kim2012a,shi2012,yu2013a,gerardin2014,pena2014a,popoff2014,davy2015,shi2015a,hsu2015,liew2016,mounaix2016,bender2020c}, reflection~\cite{horodynski2022}, and absorption~\cite{chong2011a}, as well as coherent perfect absorption~\cite{chong2010a,wan2011,sun2014,baranov2017,mullers2018,pichler2019,sweeney2019,chen2020f,wang2021,slobodkin2022}, coherent perfect reflection~\cite{yan2014a,dai2018,guo2024b}, and reflectionless scattering modes~\cite{sweeney2020a,stone2021}. Furthermore, these transport eigenvalues are crucial in recent developments in photonic control of thermal radiation~\cite{beenakker1998,miller2017,fan2017,liu2022a,guo2022b,guo2023b}, unitary control of multiport coherent~\cite{guo2022b,guo2024d} and partially coherent waves~\cite{guo2024a,guo2024b}, reciprocity constraints on reflection~\cite{guo2022a}, and topology of scattering matrices~\cite{guo2023c,guo2024c}. They have found applications in diverse fields such as imaging~\cite{sebbah2001a,ntziachristos2010a,cizmar2012,kang2015,guo2018,guo2018a,yoon2020b,wang2020p,long2021,bertolotti2022,wang2022,long2022}, communication~\cite{miller2013c,miller2019,seyedinnavadeh2024}, sensing~\cite{aulbach2011,sarma2015,mounaix2016,jeong2018a,liu2020s}, energy harvesting, integrated photonics~\cite{miller2013b,carolan2015,turpin2018,leedumrongwatthanakun2020,wetzstein2020a}, and optical neural networks~\cite{shen2017,prabhu2020,zhang2021f,ashtiani2022,bandyopadhyay2022,ohno2022,chen2023,pai2023}.

Despite significant efforts in studying both passivity and transport eigenvalues, the question of passivity constraints on transport eigenvalues remains open. Specifically, a fundamental question persists: what are the permissible combinations of transmission, reflection, and absorption eigenvalues in passive linear time-invariant systems?

This question presents two significant challenges. First, these three sets of eigenvalues are not independent but intricately related~\cite{rammal1987,freilikher1994,bruce1996,paasschens1996,brouwer1998,kogan2000,liew2014,liew2014a,hsu2015,liew2015a,yamilov2016,rotter2017}. A comprehensive understanding of their interrelations has remained elusive. Second, conventional methods for determining passivity constraints based on complex analysis of Herglotz functions~\cite{bernland2011} are not directly applicable to this problem. These \emph{analytic} methods cannot directly reveal the passivity constraints on transport eigenvalues, key \emph{algebraic} properties of the system. 

In this paper, we provide a complete answer to this question. We address the challenges by adopting a fundamentally different approach to deriving passivity constraints based on linear algebra rather than complex analysis. Using matrix analysis techniques, we derive the complete set of passivity constraints on the three vectors of eigenvalues. We uncover a surprising connection between this physical problem and Alfred Horn's inequalities~\cite{horn1962,bhatia2001}, a fundamental result in matrix theory that has only been fully proven in recent decades~\cite{klyachko1998,knutson1999,fulton2000}. This connection opens up new avenues for research at the intersection of these fields and provides powerful tools for analyzing and designing passive wave systems.

The rest of the paper is organized as follows: Section~\ref{sec:physical_problem} provides a detailed description of the physical problem. Section~\ref{sec:math_form} presents its mathematical formulation. Section~\ref{sec:examples} provides examples and illustrations of the results for systems of different sizes. Section~\ref{sec:main_results} presents our main results, including the derivation of the passivity constraints on transmission, reflection, and absorption eigenvalues. Section~\ref{sec:discussions_Horn_ineq} explores the connection to Horn's inequalities and discusses their implications. Section~\ref{sec:applications} demonstrates the application of our theory to specific wave transport phenomena. Finally, Section~\ref{sec:conclusion} discusses the broader implications of our findings and potential future directions.

\section{Physical Problem}\label{sec:physical_problem}

\begin{figure}[htbp]
    \centering
    \includegraphics[width=0.5\textwidth]{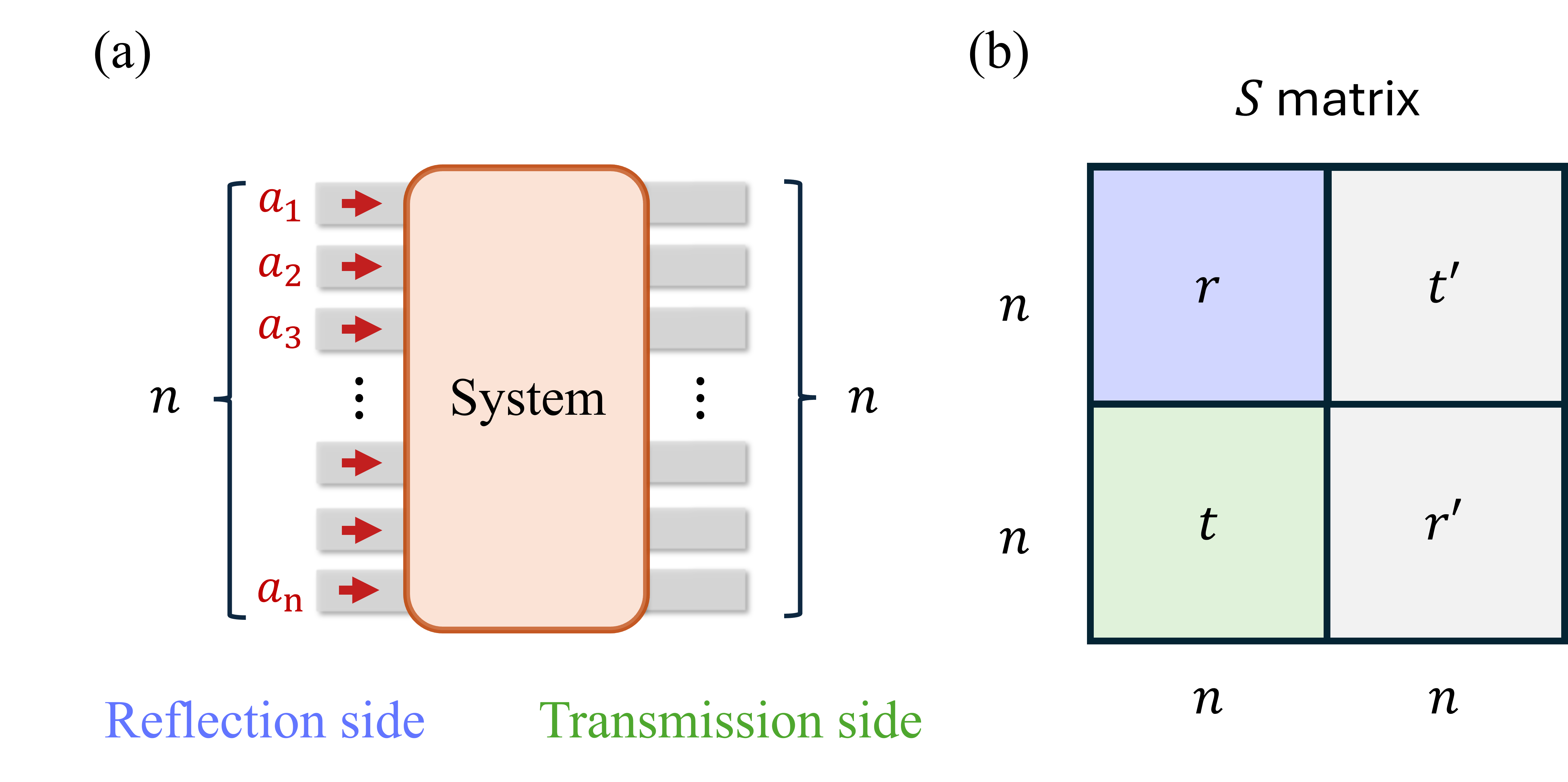}
    \caption{System and scattering matrix. (a) A $2n$-port passive linear time-invariant system with $n$ ports on each side. Waves are incident from the left. (b) Structure of the scattering matrix $S$. $r$ and $t$ ($r'$ and $t'$) represent reflection and transmission matrices for left (right) incidence, respectively. }
    \label{fig:system}
\end{figure}

Consider a $2n$-port passive linear time-invariant system with $n$ ports on each side, as illustrated in Fig.~\ref{fig:system}(a). The system is described by a $2n \times 2n$ scattering matrix
\begin{equation}
S = \begin{pmatrix}
r & t' \\
t & r' 
\end{pmatrix},    
\end{equation}
where $r$ and $t$ are the $n\times n$ field reflection and transmission matrices for waves incident from the left, while $r'$ and $t'$ are for waves incident from the right. In this paper, we focus on waves input from the left side, referring to this configuration as an $n$-input-port system. This distinguishes it from the general case where waves can enter from any of the $2n$ ports.

The transmission eigenvalues are defined as the eigenvalues of the transmittance matrix:
\begin{equation}\label{eq:def_transmittance_matrix}
T = t^{^{\dagger}} t.    
\end{equation}
The reflection eigenvalues are the eigenvalues of the reflectance matrix:
\begin{equation}\label{eq:def_reflectance_matrix}
R = r^{\dagger}r,    
\end{equation}
The absorption eigenvalues are the eigenvalues of the absorptivity matrix:
\begin{equation}\label{eq:def_absorptivity_matrix}
A = I - t^{^{\dagger}} t - r^{\dagger}r. 
\end{equation}
Note that all these matrices are Hermitian, hence their eigenvalues are real. We denote these three vectors of eigenvalues as $\bm{\tau}$, $\bm{\rho}$, and $\bm{\alpha}$, with components arranged in nonincreasing order:
\begin{equation}
\bm{\tau} = \bm{\lambda}(T), \quad \bm{\rho} = \bm{\lambda}(R), \quad \bm{\alpha} = \bm{\lambda}(A).    
\end{equation}
The tuple $(\bm{\tau}, \bm{\rho}, \bm{\alpha})$ characterizes the transmission, reflection, and absorption properties of the system for waves incident from the left side.

The question is: What are all possible values of the tuple $(\bm{\tau}, \bm{\rho}, \bm{\alpha})$ for all passive linear time-invariant systems?

\section{Mathematical formulation}\label{sec:math_form}

We now reformulate the physical problem into a mathematical form. Our focus is on the eigenvalues of the matrices $T$, $R$, and $A$ in passive linear time-invariant systems. By their definitions in Eqs.~(\ref{eq:def_transmittance_matrix}) and (\ref{eq:def_reflectance_matrix}), $T$ and $R$ are positive semidefinite, denoted as~\cite{bhatia2007}:  
\begin{equation}~\label{eq:condition-psd-T-R}
T \ge O, \quad R\ge O.   
\end{equation}
Passivity requires that $A$ is also positive semidefinite~\cite{raisbeck1954,youla1959a,beltrami1967,wohlers1969,boyd1982}:
\begin{equation}~\label{eq:condition-psd}
\quad A\ge O.    
\end{equation}
Energy conservation requires the sum of the three matrices to be the identity matrix:
\begin{equation}~\label{eq:condition-id}
T + R + A = I.    
\end{equation}
Our objective is to determine the set of all attainable tuples:
\begin{equation}
\Omega_{n} \coloneqq  \{ (\bm{\tau}, \bm{\rho}, \bm{\alpha} ) \} = \{ (\bm{\lambda}(T),\bm{\lambda}(R),\bm{\lambda}(A)) \}
\end{equation}
subject to conditions (\ref{eq:condition-psd-T-R}), (\ref{eq:condition-psd}) and (\ref{eq:condition-id}).

Conditions (\ref{eq:condition-psd-T-R}), (\ref{eq:condition-psd}) and (\ref{eq:condition-id}) immediately imply that all $\tau_i$, $\rho_i$, and $\alpha_i$ lie between 0 and 1:
\begin{equation}
\bm{\tau}, \bm{\rho}, \bm{\alpha} \in \mathcal{D}_{n},     
\end{equation}
where 
\begin{equation}
\mathcal{D}_{n} = \{(x_{1},\dots,x_{n})\in \mathbb{R}^{n}: 1 \geq x_{1}\geq \dots \geq x_{n} \geq 0\}.   
\end{equation}
Therefore,
\begin{equation}
\Omega_{n} \subseteq \mathcal{D}_{n} \times \mathcal{D}_{n} \times \mathcal{D}_{n}.     
\end{equation}

Additional constraints from (\ref{eq:condition-psd-T-R}), (\ref{eq:condition-psd}) and (\ref{eq:condition-id}) further restrict $\Omega_{n}$. The simplest of these is the trace equality:
\begin{equation}\label{eq:trace_equality}
\sum_{i=1}^{n} (\tau_{i} + \rho_{i} + \alpha_{i}) = n,    
\end{equation}
However, numerous other inequality constraints exist. Determining $\Omega_{n}$ necessitates a comprehensive understanding of all constraints implied by (\ref{eq:condition-psd-T-R}), (\ref{eq:condition-psd}) and (\ref{eq:condition-id}).

We employ a step-by-step strategy to determine $\Omega_{n}$:
\begin{enumerate}
    \item Keep $\bm{\tau}$ and $\bm{\rho}$ free and determine the set of attainable $\bm{\alpha}$, denoted as $\{\bm{\alpha}\}$.
    \item Fix $\tilde{\bm{\alpha}} \in \{\bm{\alpha}\}$, keep $\bm{\tau}$ free, and determine the set of attainable $\bm{\rho}$, denoted as $\{\bm{\rho}|\tilde{\bm{\alpha}}\}$.
    \item Fix $\tilde{\bm{\alpha}} \in \{\bm{\alpha}\}$ and $\tilde{\bm{\rho}} \in \{\bm{\rho}|\tilde{\bm{\alpha}}\}$, and determine the set of attainable $\bm{\tau}$, denoted as $\{\bm{\tau}|\tilde{\bm{\rho}},\tilde{\bm{\alpha}}\}$.
\end{enumerate}
This approach yields:
\begin{equation}
\Omega_{n} = \{(\tilde{\bm{\tau}},\tilde{\bm{\rho}},\tilde{\bm{\alpha}}) \in \mathcal{D}_{n} \times \mathcal{D}_{n} \times \mathcal{D}_{n}: \tilde{\bm{\alpha}} \in \{\bm{\alpha}\}, \tilde{\bm{\rho}} \in \{\bm{\rho}|\tilde{\bm{\alpha}}\}, \tilde{\bm{\tau}} \in \{\bm{\tau}|\tilde{\bm{\rho}}, \tilde{\bm{\alpha}}\}\}.     
\end{equation}

We make several remarks regarding this strategy.

First, our chosen sequence for determining the ranges of $\bm{\alpha}$, $\bm{\rho}$, and $\bm{\tau}$ is arbitrary. The result remains invariant under different sequences due to the problem's symmetry under the permutation of $T$, $R$, and $A$. Thus, if $(\bm{x},\bm{y},\bm{z}) \in \Omega_{n}$, any permutation of $(\bm{x},\bm{y},\bm{z})$ also belongs to $\Omega_{n}$. However, addressing $\bm{\alpha}$ first is physically useful in illustrating absorption's effect on reflection and transmission behaviors.

Second, while this step-by-step strategy for determining tuple ranges is general, its effectiveness depends on the specific problem. Ideally, the sets $\{\bm{\alpha}\}$, $\{\bm{\rho}|\tilde{\bm{\alpha}}\}$, and $\{\bm{\tau}|\tilde{\bm{\rho}},\tilde{\bm{\alpha}}\}$ should have closed-form expressions. In this case, we are fortunate that they do.

\section{Examples}\label{sec:examples}

To gain insight into this problem, we examine solutions to several special cases. 

First, we consider $n$-input-port \emph{lossless} systems where $A = O$. This yields simple results:
\begin{align}
\{\bm{\alpha}\} &= \{\bm{0}\}, \\ 
\{\bm{\rho}|\tilde{\bm{\alpha}}=\bm{0}\} &= \mathcal{D}_{n},  \\
\{\bm{\tau}|\tilde{\bm{\rho}}, \tilde{\bm{\alpha}} = \bm{0}\} &= \{\bm{1}-\tilde{\bm{\rho}}^{\uparrow}\},
\end{align}
where
\begin{equation}
\bm{0} = (0,\dots,0), \quad \bm{1} = (1,\dots,1),    
\end{equation}
and $\uparrow$ denotes reordering the vector components in non-decreasing order. Note that $\bm{\alpha} = \bm{0}$ while $\bm{\tau} = \bm{1} - \bm{\rho}^\uparrow$. Thus, this problem is trivial when loss is absent. 

\begin{figure}[htbp]
    \centering
    \includegraphics[width=0.5\textwidth]{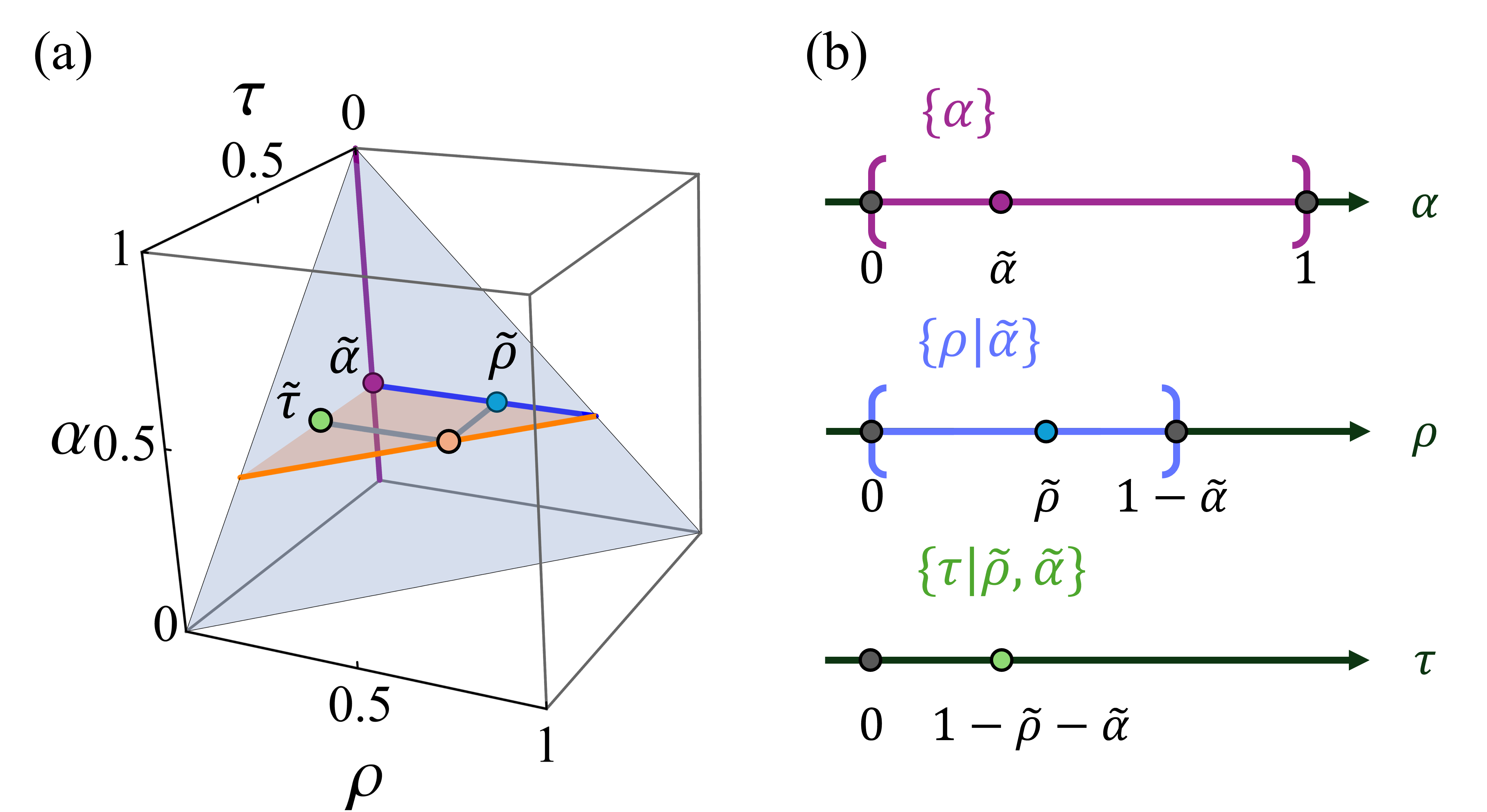}
    \caption{Possible tuples $(\tau, \rho, \alpha)$ for $1$-input-port passive systems. 
    $\Omega_1 =\{(\tilde{\tau}, \tilde{\rho}, \tilde{\alpha}): \tilde{\alpha} \in \{\alpha\},  
    \tilde{\rho} \in \{\rho|\tilde{\alpha}\},  \tilde{\tau} \in \{\tau|\tilde{\rho}, \tilde{\alpha}\}\}$. 
    (a) 3D representation of $\Omega_1$ (blue triangle). 
    (b) Sequential slices of $\Omega_1$ illustrating $\{\alpha\}$, $\{\rho|\tilde{\alpha}\}$, and 
    $\{\tau|\tilde{\rho}, \tilde{\alpha}\}$. Example parameters: $\tilde{\alpha} = 0.3$, 
    $\tilde{\rho} = 0.4$, $\tilde{\tau} = 0.3$.} 
    \label{fig:1-port_passive}
\end{figure}

Next, we examine $1$-input-port passive systems. In this case, $T$, $R$, and $A$ are scalars, leading to straightforward solutions:
\begin{align}
\{\alpha\} &= [0,1], \\
\{\rho|\tilde{\alpha}\} &= [0,1-\tilde{\alpha}],  \\
\{\tau|\tilde{\rho}, \tilde{\alpha}\} &= \{1-\tilde{\rho}-\tilde{\alpha}\}.
\end{align}
Figure \ref{fig:1-port_passive} illustrates these results.

\begin{figure}[htbp]
    \centering
    \includegraphics[width=0.7\textwidth]{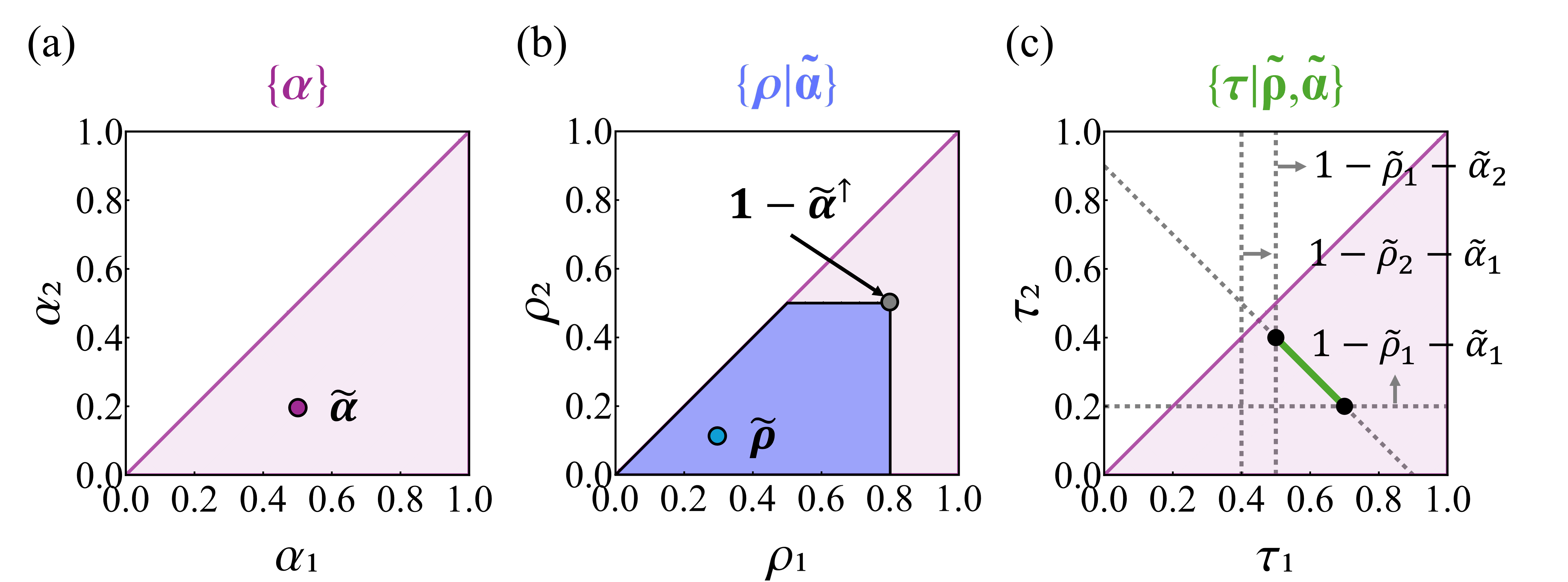}
    \caption{Possible tuples $(\bm{\tau}, \bm{\rho}, \bm{\alpha})$ for $2$-input-port passive systems. Sequential slices of 
    $\Omega_2 =\{(\tilde{\bm{\tau}}, \tilde{\bm{\rho}}, \tilde{\bm{\alpha}}): 
    \tilde{\bm{\alpha}} \in \{\bm{\alpha}\}, \tilde{\bm{\rho}} \in \{\bm{\rho}|\tilde{\bm{\alpha}}\}, 
    \tilde{\bm{\tau}} \in \{\bm{\tau}|\tilde{\bm{\rho}}, \tilde{\bm{\alpha}}\}\}$: (a) $\{\bm{\alpha}\}$, (b) $\{\bm{\rho}|\tilde{\bm{\alpha}}\}$, 
    (c) $\{\bm{\tau}|\tilde{\bm{\rho}}, \tilde{\bm{\alpha}}\}$. 
    Example parameters: $\tilde{\bm{\alpha}} = (0.5,0.2)$, $\tilde{\bm{\rho}} = (0.3,0.1)$.}
    \label{fig:2-port_passive}
\end{figure}

The problem becomes more complicated for passive systems with multiple input ports. For $2$-input-port passive systems, we will prove that:
\begin{align}
\{\bm{\alpha}\} &= \mathcal{D}_{2}, \label{eq:2-port-alpha}\\ 
\{\bm{\rho}|\tilde{\bm{\alpha}}\} &= \{\bm{\rho}\in \mathcal{D}_{2}: \rho_{1} \leq 1-\tilde{\alpha}_{2}, \rho_{2} \leq 1-\tilde{\alpha}_{1}\},  \\
\{\bm{\tau}|\tilde{\bm{\rho}}, \tilde{\bm{\alpha}}\} &= \{\bm{\tau}\in \mathcal{D}_{2}: \\ 
 & \tau_{2} \geq 1 - \tilde{\rho}_{1} - \tilde{\alpha}_{1}, \label{eq:2-port-ineq1}\\
 & \tau_{1} \geq 1 - \tilde{\rho}_{1} - \tilde{\alpha}_{2}, \label{eq:2-port-ineq2}\\
 & \tau_{1} \geq 1 - \tilde{\rho}_{2} - \tilde{\alpha}_{1},  \label{eq:2-port-ineq3}\\
& \tau_{1} + \tau_{2} = 2 - \tilde{\rho}_{1} - \tilde{\rho}_{2} -\tilde{\alpha}_{1} - \tilde{\alpha}_{2}  
\}. \label{eq:2-port-trace-eq}
\end{align}
Figure \ref{fig:2-port_passive} illustrates these results. In general, $\{\bm{\alpha}\}$ forms a right triangle, $\{\bm{\rho}|\tilde{\bm{\alpha}}\}$ a right trapezoid, and $\{\bm{\tau}|\tilde{\bm{\rho}}, \tilde{\bm{\alpha}}\}$ a line segment.

\begin{figure}[hbtp]
    \centering
    \includegraphics[width=0.6\textwidth]{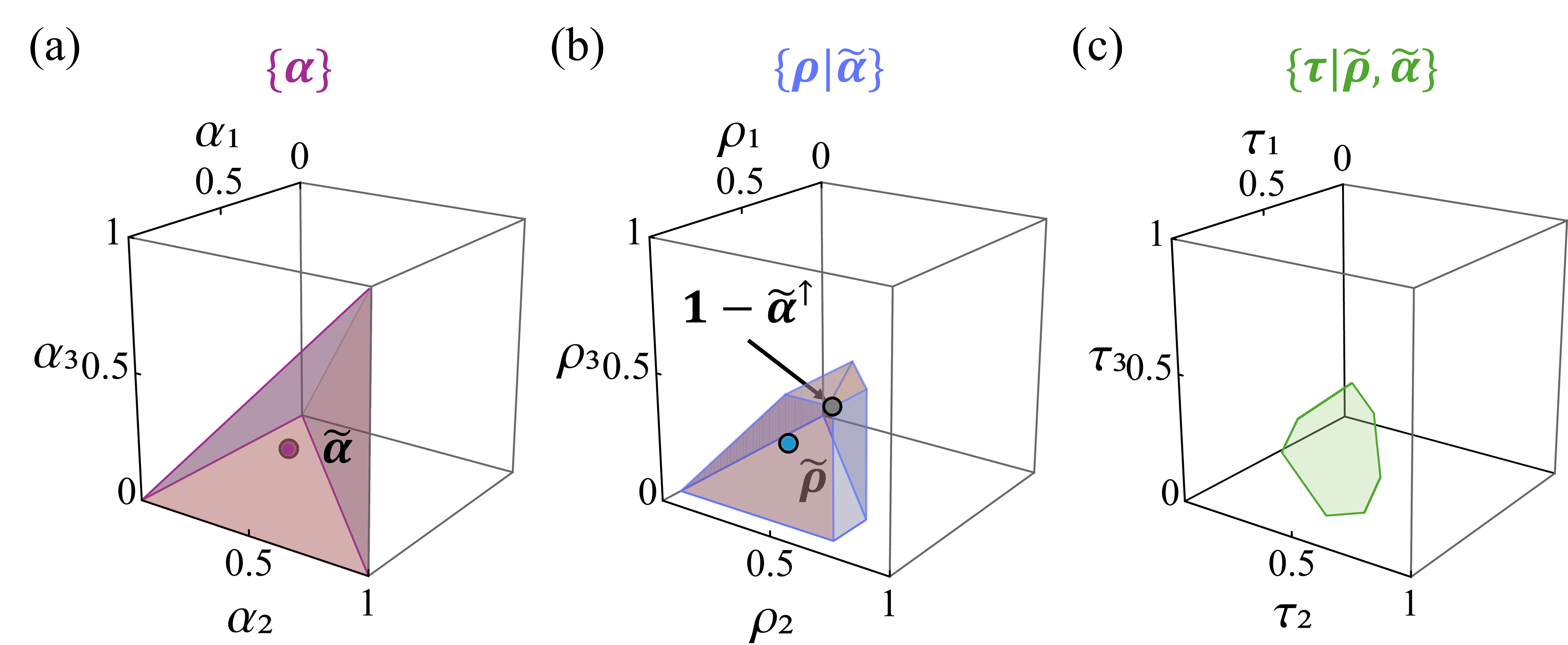}
    \caption{Possible tuples $(\bm{\tau}, \bm{\rho}, \bm{\alpha})$ for $3$-input-port passive systems. Sequential slices of 
    $\Omega_3 =\{(\tilde{\bm{\tau}}, \tilde{\bm{\rho}}, \tilde{\bm{\alpha}}): 
    \tilde{\bm{\alpha}} \in \{\bm{\alpha}\}, \tilde{\bm{\rho}} \in \{\bm{\rho}|\tilde{\bm{\alpha}}\}, 
    \tilde{\bm{\tau}} \in \{\bm{\tau}|\tilde{\bm{\rho}}, \tilde{\bm{\alpha}}\}\}$: (a) $\{\bm{\alpha}\}$, (b) $\{\bm{\rho}|\tilde{\bm{\alpha}}\}$, 
    (c) $\{\bm{\tau}|\tilde{\bm{\rho}}, \tilde{\bm{\alpha}}\}$. 
    Example parameters: $\tilde{\bm{\alpha}} = (0.5,0.3,0.1)$, $\tilde{\bm{\rho}} = (0.5,0.2,0.1)$.}
    \label{fig:3-port_passive}
\end{figure}

For $3$-input-port passive systems, we will prove that:
\begin{align}
\{\bm{\alpha}\} &= \mathcal{D}_{3}, \\ 
\{\bm{\rho}|\tilde{\bm{\alpha}}\} &= \{\bm{\rho} \in \mathcal{D}_{3}: \rho_{1} \leq 1-\tilde{\alpha}_{3}, \rho_{2} \leq 1-\tilde{\alpha}_{2}, \rho_{3} \leq 1-\tilde{\alpha}_{1} \}, \\
\{\bm{\tau}|\tilde{\bm{\rho}}, \tilde{\bm{\alpha}}\} &= \{\bm{\tau}\in \mathcal{D}_{3}: \\ 
& \tau_{3} \geq 1 - \tilde{\rho}_{1} - \tilde{\alpha}_{1}, \label{eq:3-port-Weyl-ineq-first} \\
& \tau_{2} \geq 1 - \tilde{\rho}_{1} - \tilde{\alpha}_{2}, \\
& \tau_{2} \geq 1 - \tilde{\rho}_{2} - \tilde{\alpha}_{1}, \\
 & \tau_{1} \geq 1 - \tilde{\rho}_{1} - \tilde{\alpha}_{3}, \\
 & \tau_{1} \geq 1 - \tilde{\rho}_{2} - \tilde{\alpha}_{2},  \\
& \tau_{1} \geq 1 - \tilde{\rho}_{3} - \tilde{\alpha}_{1}, \label{eq:3-port-Weyl-ineq-last} \\
& \tau_{2} + \tau_{3} \geq 2 - \tilde{\rho}_{1} - \tilde{\rho}_{2} - \tilde{\alpha}_{1} - \tilde{\alpha}_{2}, \label{eq:3-port-Fan-ineq} \\
& \tau_{1} + \tau_{3} \geq 2 - \tilde{\rho}_{1} - \tilde{\rho}_{3} - \tilde{\alpha}_{1} - \tilde{\alpha}_{2}, \label{eq:3-port-Lidskii-ineq-first}\\
& \tau_{1} + \tau_{3} \geq 2 - \tilde{\rho}_{1} - \tilde{\rho}_{2} - \tilde{\alpha}_{1} - \tilde{\alpha}_{3}, \\  
& \tau_{1} + \tau_{2} \geq 2 - \tilde{\rho}_{1} - \tilde{\rho}_{2} - \tilde{\alpha}_{2} - \tilde{\alpha}_{3}, \\
& \tau_{1} + \tau_{2} \geq 2 - \tilde{\rho}_{2} - \tilde{\rho}_{3} - \tilde{\alpha}_{1} - \tilde{\alpha}_{2}, \label{eq:3-port-Lidskii-ineq-last}\\
& \tau_{1} + \tau_{2} \geq 2 - \tilde{\rho}_{1} - \tilde{\rho}_{3} - \tilde{\alpha}_{1} - \tilde{\alpha}_{3},\label{eq:3-port-Horn-ineq} \\
& \tau_{1} + \tau_{2} + \tau_{3} = 3 - \tilde{\rho}_{1} - \tilde{\rho}_{2} - \tilde{\rho}_{3} -\tilde{\alpha}_{1} - \tilde{\alpha}_{2}  -\tilde{\alpha}_{3} 
\}. \label{eq:3-port-trace-eq}
\end{align}
Figure \ref{fig:3-port_passive} illustrates these results. In general, $\{\bm{\alpha}\}$ forms a tetrahedron, $\{\bm{\rho}|\tilde{\bm{\alpha}}\}$ a convex hexahedron, and $\{\bm{\tau}|\tilde{\bm{\rho}}, \tilde{\bm{\alpha}}\}$ a convex polygon lying on the plane defined by the trace equality~(\ref{eq:3-port-trace-eq}).

\section{Main results}~\label{sec:main_results}

The examples in Section~\ref{sec:examples} illustrate the main result of this paper:  
\begin{theorem*}
 For $n$-input-port passive linear time-invariant systems, the set of all attainable tuples of transmission, reflection, and absorption eigenvalues is 
\begin{equation}
\Omega_{n} = \{(\tilde{\bm{\tau}},\tilde{\bm{\rho}},\tilde{\bm{\alpha}}) \in \mathcal{D}_{n} \times \mathcal{D}_{n} \times \mathcal{D}_{n}: \tilde{\bm{\alpha}} \in \{\bm{\alpha}\}, \tilde{\bm{\rho}} \in \{\bm{\rho}|\tilde{\bm{\alpha}}\}, \tilde{\bm{\tau}} \in \{\bm{\tau}|\tilde{\bm{\rho}}, \tilde{\bm{\alpha}}\} \},
\end{equation}
where 
\begin{align}
\{\bm{\alpha}\} = &\mathcal{D}_{n}, \label{eq:result-1}  \\
\{\bm{\rho}|\tilde{\bm{\alpha}}\} = &\{\bm{\rho} \in \mathcal{D}_{n}: \rho_{i} \leq 1 - \tilde{\alpha}_{n+1-i}, \, i=1,\dots,n\}, \label{eq:result-2} \\
\{\bm{\tau}|\tilde{\bm{\rho}},\tilde{\bm{\alpha}}\} = &\{\bm{\tau} \in \mathcal{D}_{n}: \sum_{k=1}^{n} \tau_k = n - \sum_{i=1}^{n} \tilde{\rho}_i - \sum_{j=1}^{n} \tilde{\alpha}_j; \label{eq:main_result_trace}
\\
&\sum_{k\in K} \tau_{n+1-k} \geq r - \sum_{i\in I} \tilde{\rho}_i - \sum_{j\in J} \tilde{\alpha}_j, \quad \forall (I,J,K) \in T_r^n, \; r = 1,2,\ldots,n-1. \}\label{eq:main_result_ineq}
\end{align}
\end{theorem*}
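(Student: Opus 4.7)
The plan is to prove the three assertions in the order (\ref{eq:result-1}), (\ref{eq:result-2}), (\ref{eq:main_result_ineq}) dictated by the sequential strategy of Section~\ref{sec:math_form}. The first two steps are elementary linear algebra, while the third is the technical core and is where the connection to Horn's problem enters.

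For (\ref{eq:result-1}) I would give an explicit realization: for any $\tilde{\bm{\alpha}} \in \mathcal{D}_{n}$, the choice $A = \mathrm{diag}(\tilde{\bm{\alpha}})$, $R = O$, and $T = I - A$ yields three positive semidefinite matrices satisfying $T + R + A = I$ with $\bm{\lambda}(A) = \tilde{\bm{\alpha}}$. For (\ref{eq:result-2}), necessity follows from Weyl's monotonicity principle: passivity forces $O \leq R \leq I - A$, and hence $\lambda_{i}(R) \leq \lambda_{i}(I - A) = 1 - \tilde{\alpha}_{n+1-i}$ in nonincreasing order. Sufficiency I would establish by a simultaneous-diagonal construction, taking $A$ and $R$ codiagonal with entries $A_{ii} = \tilde{\alpha}_{n+1-i}$ and $R_{ii} = \tilde{\rho}_{i}$; the hypothesis $\tilde{\rho}_{i} + \tilde{\alpha}_{n+1-i} \leq 1$ then forces $T = I - R - A$ to be diagonal with entries in $[0,1]$, and thus PSD.

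The main obstacle is (\ref{eq:main_result_ineq}), characterizing $\{\bm{\tau}|\tilde{\bm{\rho}},\tilde{\bm{\alpha}}\}$. My key observation is that, writing $M \coloneqq R + A$ so that $T = I - M$, the problem becomes equivalent to asking which eigenvalue vectors $\bm{\mu} = \bm{\lambda}(M)$ arise as the spectrum of a sum of two Hermitian matrices with prescribed spectra $\tilde{\bm{\rho}}$ and $\tilde{\bm{\alpha}}$---precisely Horn's classical problem on eigenvalues of sums of Hermitian matrices. I would invoke Horn's theorem, conjectured in~\cite{horn1962} and established by Klyachko~\cite{klyachko1998}, Knutson and Tao~\cite{knutson1999}, and Fulton~\cite{fulton2000}, according to which the admissible $\bm{\mu}$ are exactly those satisfying the trace equality $\sum_{k} \mu_{k} = \sum_{i} \tilde{\rho}_{i} + \sum_{j} \tilde{\alpha}_{j}$ together with the Horn inequalities $\sum_{k \in K} \mu_{k} \leq \sum_{i \in I} \tilde{\rho}_{i} + \sum_{j \in J} \tilde{\alpha}_{j}$ over all Horn triples $(I,J,K) \in T_{r}^{n}$ and all $r \in \{1,\ldots,n-1\}$. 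The substitution $\tau_{k} = 1 - \mu_{n+1-k}$ then converts these verbatim into (\ref{eq:main_result_trace}) and (\ref{eq:main_result_ineq}); the remaining passivity constraint $T \geq O$ reduces to $\mu_{1} \leq 1$, i.e.\ $\tau_{n} \geq 0$, which is already encoded in the membership $\bm{\tau} \in \mathcal{D}_{n}$. Since $\tilde{\bm{\rho}}, \tilde{\bm{\alpha}} \in \mathcal{D}_{n}$ have nonnegative entries, any Hermitian $R, A$ realizing them are automatically PSD, so no additional positivity step is needed.

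The deepest technical ingredient is thus Horn's theorem itself, which I would invoke as a black box rather than reprove. The novelty of the argument is not new matrix analysis but the recognition that, under the substitution $M = R + A$ and $T = I - M$, the physical question of admissible transport-eigenvalue tuples is exactly Horn's eigenvalue-sum problem in disguise, with the passivity bound $M \leq I$ absorbed by the ambient constraint $\bm{\tau} \in \mathcal{D}_{n}$.
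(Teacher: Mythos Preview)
Your proposal is correct and follows essentially the same route as the paper: explicit diagonal constructions for (\ref{eq:result-1}) and (\ref{eq:result-2}), Weyl/Loewner monotonicity for the necessity direction of (\ref{eq:result-2}), and the reduction $T = I - (R+A)$ so that Horn's theorem can be invoked as a black box for (\ref{eq:main_result_trace})--(\ref{eq:main_result_ineq}). Your reversed pairing $A_{ii} = \tilde{\alpha}_{n+1-i}$, $R_{ii} = \tilde{\rho}_{i}$ in the sufficiency step of (\ref{eq:result-2}) is exactly what is needed to guarantee $T = I - R - A \geq O$ directly from the hypothesis $\tilde{\rho}_{i} + \tilde{\alpha}_{n+1-i} \leq 1$.
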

The condition (\ref{eq:main_result_trace}) represents the trace equality~(\ref{eq:trace_equality}). The condition (\ref{eq:main_result_ineq}) comprises a set of linear inequalities known as Horn's inequalities~\cite{horn1962}, where 
\begin{equation}
I = \{i_1 < i_2 < \ldots< i_r\}, \quad J = \{j_1 < j_2 < \ldots< j_r\}, \quad K = \{k_1 < k_2 < \ldots < k_r\}
\end{equation}
are certain ordered subsets of $\{1,2,\ldots,n\}$ with the same cardinality $r \in \{1,2,\ldots,n-1\}$. We define admissible triples $(I,J,K)$ as follows. The set $T^n_r$ of all admissible triples $(I,J,K)$ of cardinality $r$ can be described by induction on $r$~\cite{fulton2000,bhatia2001,niculescu2018}: For $r=1$, $(I,J,K)$ is in $T^{n}_1$ if and only if
\begin{equation}
    i_1 + j_1 = k_1 +1.
\end{equation}
For $r>1$, $(I,J,K)$ is in $T^{n}_r$ if and only if
\begin{equation}
\sum\limits_{u= 1}^{r} i_u + \sum\limits_{v=1}^{r} j_v = \sum\limits_{w=1}^{r} k_w + \frac{r(r+1)}{2},   
\end{equation}
and, for all $1\leq p\leq r-1$ and all $(U,V,W)\in T^{r}_{p}$,
\begin{equation}
\sum\limits_{u\in U} i_u + \sum\limits_{v \in V} j_v \leq \sum\limits_{w \in W} k_{w} + \frac{p(p+1)}{2}.
\end{equation}


\begin{proof}[Proof of Eq. (\ref{eq:result-1})]
We know that $\{\bm{\alpha}\} \subseteq \mathcal{D}_{n}$. To complete the proof, we need to prove that $\mathcal{D}_{n} \subseteq \{\bm{\alpha}\}$. It suffices to show that for any $\bm{x} \in \mathcal{D}_{n}$, there exists a system with absorption eigenvalues $\bm{x}$. We construct the following $n \times n$ transmission and reflection matrices:
\begin{align} 
t &= O,  \\
r &= \operatorname{diag} (\sqrt{1-x_{1}}, \sqrt{1-x_{2}}, \dots, \sqrt{1-x_{n}}),
\end{align}
where $\operatorname{diag}(\bm{v})$ represents a diagonal matrix with the elements of vector $\bm{v}$ along its main diagonal. Then the corresponding transmittance, reflectance, and absorptivity matrices
\begin{align} 
T &= t^\dagger t = O,  \\
R &= r^\dagger r = \operatorname{diag} (1-x_{1}, 1-x_{2}, \dots, 1-x_{n}), \\
A &= I - t^\dagger t - r^\dagger r =\operatorname{diag} (x_{1}, x_{2}, \dots, x_{n}),
\end{align}
satisfy conditions (\ref{eq:condition-psd-T-R}), (\ref{eq:condition-psd}) and (\ref{eq:condition-id}). Moreover, $\bm{\lambda}(A) = \bm{x}$. This confirms that $\mathcal{D}_{n} \subseteq \{\bm{\alpha}\}$ and completes the proof of Eq.~(\ref{eq:result-1}).    
\end{proof}

\begin{proof}[Proof of Eq.~(\ref{eq:result-2})]
Let $\mathcal{P}_{n}$ denote the set on the right-hand side of Eq.~(\ref{eq:result-2}). We first prove that $\{\bm{\rho}|\tilde{\bm{\alpha}}\} \subseteq \mathcal{P}_{n}$. We know that $\{\bm{\rho}|\tilde{\bm{\alpha}}\} \subseteq \mathcal{D}_{n}$. For $\tilde{\bm{\rho}} \in \{\bm{\rho}|\tilde{\bm{\alpha}}\}$, we have
\begin{equation}
\tilde{\bm{\rho}} = \bm{\lambda}(R) = \bm{\lambda}(I-A-T).     
\end{equation}
We now apply Loewner's theorem~\cite{lowner1934}: For $n \times n$ Hermitian matrices $M$ and $N$, if $M - N$ is positive semidefinite, then
\begin{equation}
\lambda_{j}(M) \geq \lambda_{j}(N), \quad j=1,\dots, n.     
\end{equation}
Let $M = I - A$ and $N = I - A - T$, so that $M - N = T$. As $T$ is positive semidefinite, Loewner's theorem implies that
\begin{equation}
\tilde{\rho}_{i} \leq \lambda_{i}(I-A) = 1 - \lambda_{n+1-i}(A) = 1 - \tilde{\alpha}_{n+1-i}.  
\end{equation} 
This confirms that $\tilde{\bm{\rho}} \in \mathcal{P}_{n}$. Next, we prove $\mathcal{P}_{n} \subseteq \{\bm{\rho}|\tilde{\bm{\alpha}}\}$. It suffices to prove that for any $\bm{y} \in \mathcal{P}_{n}$, there exists a system with reflection eigenvalues $\bm{y}$, given prescribed absorption eigenvalues $\tilde{\bm{\alpha}} \in \mathcal{D}_n$. We construct the following $n\times n$ transmission and reflection matrices:
\begin{align} 
t &= \operatorname{diag} (\sqrt{1-y_{1}-\tilde{\alpha}_{1}}, \sqrt{1-y_{2}-\tilde{\alpha}_{2}}, \dots, \sqrt{1-y_{n}-\tilde{\alpha}_{n}}),  \\
r &= \operatorname{diag} (\sqrt{y_{1}}, \sqrt{y_{2}}, \dots, \sqrt{y_{n}}).
\end{align}
Then the corresponding transmittance, reflectance, and absorptivity matrices
\begin{align} 
T &= t^\dagger t = \operatorname{diag} (1-y_{1}-\tilde{\alpha}_{1}, 1-y_{2}-\tilde{\alpha}_{2}, \dots, 1-y_{n}-\tilde{\alpha}_{n}),  \\
R &= r^\dagger r =  \operatorname{diag} (y_{1}, y_{2}, \dots, y_{n}), \\
A &= I - t^\dagger t - r^\dagger r = \operatorname{diag} (\tilde{\alpha}_{1}, \tilde{\alpha}_{2}, \dots, \tilde{\alpha}_{n}),
\end{align}
satisfy conditions (\ref{eq:condition-psd-T-R}), (\ref{eq:condition-psd}) and (\ref{eq:condition-id}). Moreover, $\bm{\lambda}(R) = \bm{y}$ and $\bm{\lambda}(A) = \tilde{\bm{\alpha}}$. This confirms that $\mathcal{P}_{n} \subseteq \{\bm{\rho}|\tilde{\bm{\alpha}}\}$ and completes the proof of Eq.~(\ref{eq:result-2}).
\end{proof}


\begin{proof}[Proof of Eqs.~(\ref{eq:main_result_trace}) and (\ref{eq:main_result_ineq})]
Let $\mathcal{T}_{n}$ denote the set on the right-hand side of Eq.~(\ref{eq:main_result_ineq}). We begin by proving that $\{\bm{\tau}|\tilde{\bm{\rho}}, \tilde{\bm{\alpha}}\} \subseteq \mathcal{T}_{n}$. We have established that $\{\bm{\tau}|\tilde{\bm{\rho}}, \tilde{\bm{\alpha}}\} \subseteq \mathcal{D}_{n}$. For any $\tilde{\bm{\tau}} \in \{\bm{\tau}|\tilde{\bm{\rho}}, \tilde{\bm{\alpha}}\}$, we have
\begin{equation}
\tilde{\bm{\tau}} = \bm{\lambda}(T) = \bm{\lambda}(I-R-A),     
\end{equation}
and thus, 
\begin{equation}\label{eq:tau_lambda_componentwise}
\tilde{\tau}_{n+1-k} = 1 - \lambda_k (R+A). 
\end{equation}
We now determine the possible values of $\bm{\lambda}(R+A)$ given the prescribed eigenvalues $\bm{\lambda}(R)=\tilde{\bm{\rho}}$ and $\bm{\lambda}(A)=\tilde{\bm{\alpha}}$. We apply Alfred Horn's theorem~\cite{klyachko1998,knutson1999,fulton2000}, which states: A triple of non-increasingly ordered real $n$-vectors $(\bm{a},\bm{b},\bm{c})$ represents the eigenvalues of $n\times n$ Hermitian matrices $M,N,$ and $M+N$ if and only if 
\begin{equation}\label{eq:Horn_eq}
\sum_{k=1}^{n}c_{k} = \sum_{i=1}^{n}a_{i} + \sum_{j=1}^{n}b_{j}    
\end{equation}
and
\begin{equation}\label{eq:Horn_ineq}
\sum_{k\in K}c_{k} \leq \sum_{i\in I}a_{i} + \sum_{j\in J}b_{j}    
\end{equation}
for all $(I,J,K)$ in $T^{n}_{r}$, and all $r<n$. We set $M= R$, $N=A$, thus $(\bm{a},\bm{b},\bm{c}) = (\tilde{\bm{\rho}},\tilde{\bm{\alpha}},\bm{\lambda}(R+A))$ satisfies the conditions~(\ref{eq:Horn_eq}) and (\ref{eq:Horn_ineq}). Combining these conditions with Eq.~(\ref{eq:tau_lambda_componentwise}), we obtain
\begin{align}
&\sum_{k=1}^{n} \tilde{\tau}_k = n - \sum_{i=1}^{n} \tilde{\rho}_i - \sum_{j=1}^{n} \tilde{\alpha}_j; \\
&\sum_{k\in K} \tilde{\tau}_{n+1-k} \geq r - \sum_{i\in I} \tilde{\rho}_i - \sum_{j\in J} \tilde{\alpha}_j,
\end{align}
for all $(I,J,K)$ in $T^{n}_{r}$, and all $r<n$. This confirms that $\tilde{\bm{\tau}} \in \mathcal{T}_n$ and thus $\{\bm{\tau}|\tilde{\bm{\rho}}, \tilde{\bm{\alpha}}\} \subseteq \mathcal{T}_{n}$.

Next, we demonstrate that $\mathcal{T}_{n} \subseteq \{\bm{\tau}|\tilde{\bm{\rho}},\tilde{\bm{\alpha}}\}$. It suffices to prove that for any $\bm{z} \in \mathcal{T}_{n}$, there exists a system with transmission eigenvalues $\bm{z}$, given prescribed reflection eigenvalues $\tilde{\bm{\rho}} \in \mathcal{P}_n$ and absorption eigenvalues $\tilde{\bm{\alpha}} \in \mathcal{D}_n$. Observe that $\bm{z} \in \mathcal{T}_n$ implies the triple $(\tilde{\bm{\rho}}, \tilde{\bm{\alpha}}, \bm{1}-\bm{z}^\uparrow)$ satisfies conditions~(\ref{eq:Horn_eq}) and (\ref{eq:Horn_ineq}). By Alfred Horn's theorem, there exist two $n\times n$ Hermitian matrices $R$ and $A$ satisfying
\begin{equation}
\bm{\lambda}(R) = \tilde{\bm{\rho}}, \quad \bm{\lambda}(A) = \tilde{\bm{\alpha}}, \quad  \bm{\lambda}(R+A) = \bm{1}-\bm{z}^\uparrow.
\end{equation}
We then define 
\begin{equation}
    T = I - R - A.
\end{equation}
By construction, $T$, $R$, and $A$ satisfy condition (\ref{eq:condition-id}), and
\begin{equation}
\bm{\lambda}(T) =  \bm{z}.
\end{equation}
Moreover, since $\bm{z} \in \mathcal{T}_n \subseteq \mathcal{D}_n$, $\tilde{\bm{\rho}} \in \mathcal{P}_n \subseteq \mathcal{D}_n$, and $\tilde{\bm{\alpha}} \in \mathcal{D}_n$, matrices $T$, $R$, and $A$ are positive semidefinite, satisfying conditions (\ref{eq:condition-psd-T-R}) and (\ref{eq:condition-psd}). For any $n\times n$ positive semidefinite matrix $M$, there exists a unique $n\times n$ positive semidefinite matrix $N$ such that $N^2 = M$ (see Ref.~\cite{zhang2011}, p.~81, Theorem~3.5). We denote this matrix $N$ as the square root of $M$, written as $N = M^{1/2}$. Consequently, we can construct a system with the following $n \times n$ transmission and reflection matrices 
\begin{equation}
t = T^{1/2}, \qquad r = R^{1/2}.
\end{equation}
Such a system satisfies all required conditions. This confirms that $\mathcal{T}_{n} \subseteq \{\bm{\tau}|\tilde{\bm{\rho}},\tilde{\bm{\alpha}}\}$ and completes the proof of Eqs.~(\ref{eq:main_result_trace}) and (\ref{eq:main_result_ineq})  
\end{proof}



\section{Horn's inequalities}\label{sec:discussions_Horn_ineq}

The emergence of Horn's inequalities (\ref{eq:main_result_ineq}) from the passivity condition is a striking result. These inequalities, conjectured by Alfred Horn in 1962 \cite{horn1962}, address Weyl's problem posed in 1912~\cite{weyl1912}: determining the relationships between eigenvalues of Hermitian matrices $M$, $N$, and $M+N$. This long-standing conjecture remained unresolved for nearly four decades.

This conjecture is now a theorem due to the groundbreaking works by Alexander A. Klyachko \cite{klyachko1998} in 1998 and Allen Knutson and Terence Tao \cite{knutson1999} in 1999. The proof utilizes innovative techniques from algebraic geometry and representation theory. An accessible introduction to Horn's inequalities is offered by Rajendra Bhatia~\cite{bhatia2001}. A more technical description of the proof and its implications can be found in William Fulton's review \cite{fulton2000}.

The appearance of Horn's inequalities in our analysis of passivity constraints on transport eigenvalues establishes an unexpected bridge between fundamental wave physics and advanced matrix theory. This connection suggests that the mathematical structures underlying passivity in wave systems are more intricate and far-reaching than previously recognized.

In our context, Horn's inequalities have three key implications:
\begin{enumerate}[(i)]
    \item For given absorption eigenvalues $\tilde{\bm{\alpha}} \in \{\bm{\alpha} \}$ and reflection eigenvalues $\tilde{\bm{\rho}} \in \{\bm{\rho} | \tilde{\bm{\alpha}}\}$, the set of all possible transmission eigenvalues $\{\bm{\tau}|\tilde{\bm{\rho}},\tilde{\bm{\alpha}}\}$ forms a convex polyhedron in $\mathcal{D}_n$.  
    \item This convex polyhedron is characterized by the trace equality and Horn's inequalities.
    \item These inequalities are derivable through an inductive procedure.
\end{enumerate}

The number of Horn's inequalities grows rapidly with increasing $n$. Table~\ref{table:Horn_inequalities} presents the number of Horn's inequalities, denoted as $\verb|#| T^n \coloneqq \sum_r \verb|#| T^n_r$, as a function of $n$, where $\verb|#|$ represents the cardinality of a set. 

\begin{table}[hbtp]
\centering
\caption{The number of Horn's inequalities}
\begin{tabular}{@{}P{1.5cm}P{1.5cm}P{1.5cm}P{1.5cm}P{1.5cm}P{1.5cm}P{1.5cm}P{1.5cm}P{1.5cm}@{}}
\addlinespace 
\toprule
$n$     & $2$ & $3$  & $4$  & $5$   & $6$   & $7$    & $8$    & $\cdots$\\ \midrule
$\verb|#| T^n$ & $3$ & $12$ & $41$ & $142$ & $522$ & $2062$ & $8752$ & $\cdots$\\ \bottomrule
\end{tabular}
\label{table:Horn_inequalities}
\end{table}

Horn's inequalities and the trace equality provide a complete set of conditions characterizing the set $\{\bm{\tau}|\tilde{\bm{\rho}},\tilde{\bm{\alpha}}\}$ in $\mathcal{D}_{n}$. For $n \leq 5$, these inequalities are independent. However, when $n > 5$, some inequalities may become redundant~\cite{knutson1999}. An algorithm based on Schubert calculus can be used to identify the independent inequalities~\cite{fulton2000}.
 
The complete set of Horn's inequalities is complicated and extensive. Therefore, it is beneficial to examine several key subsets of Horn's inequalities. These families of inequalities, valid for all $n$, provide simpler necessary conditions for characterizing $\{\bm{\tau}|\tilde{\bm{\rho}},\tilde{\bm{\alpha}}\}$ in $\mathcal{D}_{n}$. 

\begin{enumerate}[(i)]
    \item \textbf{Weyl's inequalities~\cite{weyl1912}:}
    \begin{equation}
        \tau_{n+2-i-j} \geq 1 - \tilde{\rho}_i - \tilde{\alpha}_j, \quad \text{for}\;\; n+2-i-j \geq 1.
    \end{equation}
    \item  \textbf{Ky Fan's inequalities~\cite{fan1949}:}
    \begin{equation}
        \sum_{k=1}^{r} \tau_{n+1-k} \geq r - \sum_{i=1}^{r} \tilde{\rho}_i -\sum_{j=1}^{r} \tilde{\alpha}_j, \quad \text{for}\;\; 1 \leq r \leq n-1.
    \end{equation}
    \item \textbf{Lidskii–Wielandt inequalities~\cite{lidskii1950,wielandt1955,li1999a}:}
    
    Let $1 \leq r \leq n-1$ and $1 \leq i_1 < i_2 < \ldots < i_r \leq n$. Then
    \begin{align}
        \sum_{k=1}^{r} \tau_{n+1-i_k} \geq r - \sum_{k=1}^{r} \tilde{\rho}_{i_k} -\sum_{j=1}^{r} \tilde{\alpha}_j, \\
        \sum_{k=1}^{r} \tau_{n+1-i_k} \geq r -\sum_{j=1}^{r} \tilde{\rho}_j -\sum_{k=1}^{r} \tilde{\alpha}_{i_k} .
    \end{align}
\end{enumerate}

To illustrate these inequalities, we revisit our previous examples of $2$- and $3$-input-port passive systems. When $n=2$, Weyl's inequalities reduce to inequalities~(\ref{eq:2-port-ineq1}), (\ref{eq:2-port-ineq2}), and (\ref{eq:2-port-ineq3}), which together with the trace equality~(\ref{eq:2-port-trace-eq}) fully characterize $\{\bm{\tau}|\tilde{\bm{\rho}},\tilde{\bm{\alpha}}\}$ in $\mathcal{D}_{n}$.

When $n=3$, Weyl's inequalities correspond to inequalities~(\ref{eq:3-port-Weyl-ineq-first})-(\ref{eq:3-port-Weyl-ineq-last}). Ky Fan's inequalities lead to an additional inequality~(\ref{eq:3-port-Fan-ineq}). Lidskii–Wielandt inequalities yield inequalities~(\ref{eq:3-port-Lidskii-ineq-first})-(\ref{eq:3-port-Lidskii-ineq-last}). However, these inequalities do not form a complete set; one additional inequality~(\ref{eq:3-port-Horn-ineq}) is required. These twelve inequalities, together with the trace equality~(\ref{eq:3-port-trace-eq}), provide a complete characterization of $\{\bm{\tau}|\tilde{\bm{\rho}},\tilde{\bm{\alpha}}\}$ in $\mathcal{D}_{n}$.

\section{Applications}\label{sec:applications}

Our theory provides a powerful framework for analyzing wave transport phenomena. Here, we apply our results to gain new insights into several important physical effects that lead to extremal responses of transmission, reflection, and absorption.

\subsection{Open channels, coherent perfect reflection, and coherent perfect absorption}

We first consider a group of fundamental phenomena in wave transport: open channels, coherent perfect reflection, and coherent perfect absorption. These effects are characterized by the existence of unity transmission, reflection, and absorption eigenvalues, respectively. More specifically, for an $n$-input-port passive linear time-invariant system:
\begin{itemize}
    \item $k$-fold open channels occur when $\tau_1 = \tau_2 = \cdots = \tau_k = 1$.
    \item $k$-fold coherent perfect reflection occurs when $\rho_1 = \rho_2 = \cdots = \rho_k = 1$.
    \item $k$-fold coherent perfect absorption occurs when $\alpha_1 = \alpha_2 = \cdots = \alpha_k = 1$.
\end{itemize}

Our goal is to determine the permissible combinations of transmission, reflection, and absorption eigenvalues for systems exhibiting these phenomena. To illustrate our approach, we focus on the case of $k$-fold coherent perfect absorption. The analysis for open channels and coherent perfect reflection follows a similar pattern due to the symmetry of the problem under the permutation of $T$, $R$, and $A$. 

\subsubsection{Example: 2-input-port passive systems with 1-fold coherent perfect absorption}

Let us examine 2-input-port passive systems with 1-fold coherent perfect absorption (1-CPA). This condition imposes an additional constraint $\alpha_1 = 1$, which further restricts the set $\Omega_2$ given by Eqs.~(\ref{eq:2-port-alpha})-(\ref{eq:2-port-trace-eq}). We obtain the refined set of permissible tuples of eigenvalues:
\begin{align}
\Omega_{2}|_{\text{1-CPA}}  = \{(\tilde{\bm{\tau}},\tilde{\bm{\rho}},\tilde{\bm{\alpha}}) &\in \Omega_{2}: \tilde{\alpha}_1 = 1\}\\
= \{(\tilde{\bm{\tau}},\tilde{\bm{\rho}},\tilde{\bm{\alpha}}) &\in \mathcal{D}_{2} \times \mathcal{D}_{2} \times \mathcal{D}_{2}: \tilde{\bm{\alpha}} \in \{\bm{\alpha}\}, \tilde{\bm{\rho}} \in \{\bm{\rho}|\tilde{\bm{\alpha}}\}, \tilde{\bm{\tau}} \in \{\bm{\tau}|\tilde{\bm{\rho}},
\tilde{\bm{\alpha}}\},  \nonumber \\
    \{\bm{\alpha}\} &= \{(1,\alpha_2): \alpha_2 \in [0,1]\}, \nonumber \\
    \{\bm{\rho}|\tilde{\bm{\alpha}}\} &= \{(\rho_1,0): \rho_1 \in [0,1-\tilde{\alpha}_2]\},  \nonumber \\
    \{\bm{\tau}|\tilde{\bm{\rho}},\tilde{\bm{\alpha}}\} &= \{(1-\tilde{\rho}_1-\tilde{\alpha}_2, 0)\} \quad \}. \label{eq:1CPA}
\end{align}

\begin{figure}[htbp]
    \centering
    \includegraphics[width=0.6\textwidth]{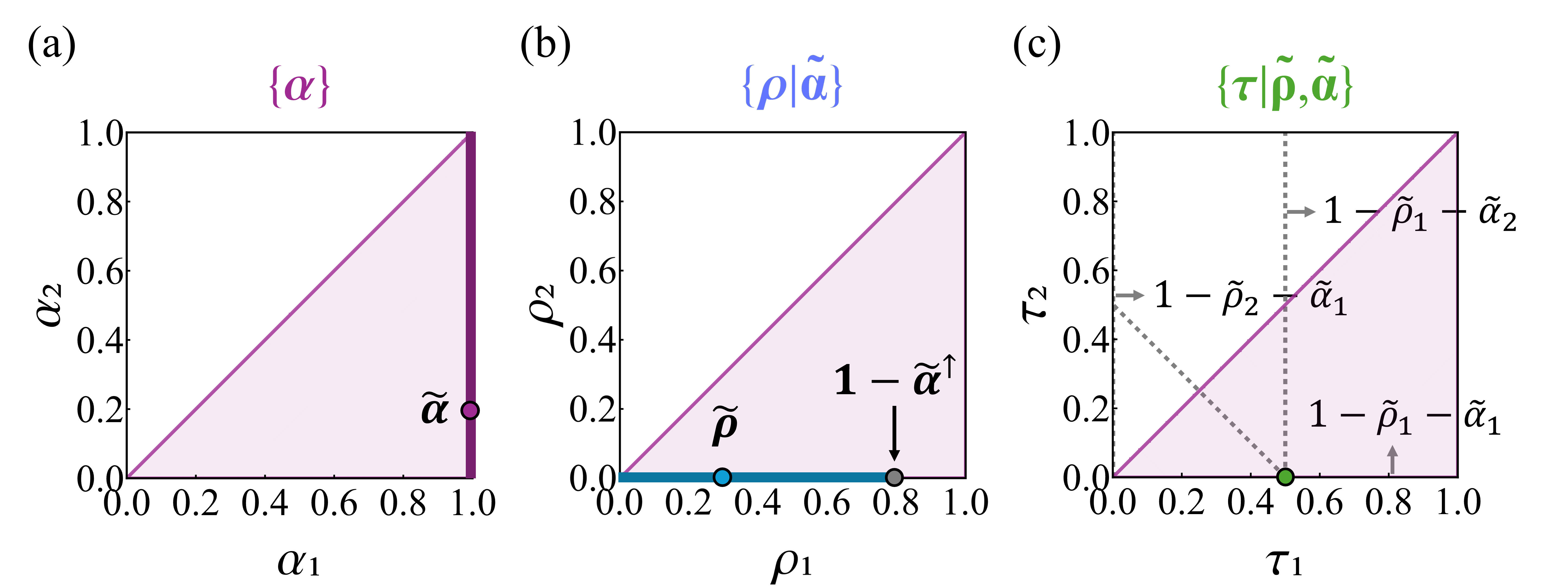}
    \caption{Possible tuples $(\bm{\tau}, \bm{\rho}, \bm{\alpha})$ for 2-input-port passive systems with 1-CPA. (a) The set $\{\bm{\alpha}\}$ (purple interval). (b) The set $\{\bm{\rho}|\tilde{\bm{\alpha}}\}$ for $\tilde{\alpha}_2 = 0.2$ (blue interval). (c) The set $\{\bm{\tau}|\tilde{\bm{\rho}},\tilde{\bm{\alpha}}\}$ for $\tilde{\alpha}_2 = 0.2$ and $\tilde{\rho}_1 = 0.3$ (green point).}
    \label{fig:1CPA-2port}
\end{figure}

Figure \ref{fig:1CPA-2port} illustrates these results. Comparing this with Fig.~\ref{fig:2-port_passive}, we observe that the 1-CPA condition effectively reduces the dimensionality of the sets $\{\bm{\alpha}\}$, $\{\bm{\rho}|\tilde{\bm{\alpha}}\}$, and $\{\bm{\tau}|\tilde{\bm{\rho}},\tilde{\bm{\alpha}}\}$ by 1. Specifically, $\alpha_1 = 1$ implies $\rho_2 = \tau_2 = 0$, while the remaining eigenvalues $(\alpha_2, \rho_1, \tau_1)$ can take values in the entire unrestricted set $\Omega_1$ (see Fig.~\ref{fig:1-port_passive}). Thus, Eq.~(\ref{eq:1CPA}) can be simplified as
\begin{equation}
\begin{split}
\Omega_2|_{\text{1-CPA}} = \{&(\tilde{\bm{\tau}},\tilde{\bm{\rho}},\tilde{\bm{\alpha}}) \in \mathcal{D}_2 \times \mathcal{D}_2 \times \mathcal{D}_2: \tilde{\bm{\alpha}} = (1,\alpha_2), \\
& \tilde{\bm{\rho}} = (\rho_1, 0), \,
\tilde{\bm{\tau}} = (\tau_1, 0), \,(\tau_1, \rho_1, \alpha_2) \in \Omega_{1}\},
\end{split}
\label{eq:Omega_2_1-CPA-simplified}
\end{equation}

\subsubsection{General case: n-input-port passive systems with k-fold coherent perfect absorption}

We can generalize the observation of the previous section to arbitrary $n$-input-port passive systems with $k$-fold coherent perfect absorption ($k$-CPA). The refined set of permissible eigenvalues is:
\begin{equation}
\begin{split}
\Omega_n|_{\text{k-CPA}} = \{&(\tilde{\bm{\tau}},\tilde{\bm{\rho}},\tilde{\bm{\alpha}}) \in \mathcal{D}_n \times \mathcal{D}_n \times \mathcal{D}_n: \tilde{\bm{\alpha}} = (\bm{1}_k, \tilde{\bm{\alpha}}_{[n-k]}), \\
&\tilde{\bm{\rho}} = (\tilde{\bm{\rho}}_{[n-k]}, \bm{0}_k), \,
\tilde{\bm{\tau}} = (\tilde{\bm{\tau}}_{[n-k]}, \bm{0}_k), \, (\tilde{\bm{\tau}}_{[n-k]},\tilde{\bm{\rho}}_{[n-k]},\tilde{\bm{\alpha}}_{[n-k]}) \in \Omega_{n-k}\},
\end{split}
\label{eq:Omega_n_k-CPA}
\end{equation}
where we have introduced the notations
\begin{align}
    &\bm{1}_k := (1,1,\dots,1), \quad \bm{0}_k := (0,0,\dots,0),  \\
    &\tilde{\bm{\alpha}}_{[n-k]} := (\tilde{\alpha}_{k+1},\dots,\tilde{\alpha}_n), \quad \tilde{\bm{\rho}}_{[n-k]} := (\tilde{\rho}_1,\dots,\tilde{\rho}_{n-k}), \quad \tilde{\bm{\tau}}_{[n-k]} := (\tilde{\tau}_1,\dots,\tilde{\tau}_{n-k}).
\end{align}

See Appendix~\ref{appendix:proof_k_CPA} for detailed proof of Eq.~(\ref{eq:Omega_n_k-CPA}). This result reveals a fundamental property of $k$-CPA systems: the $k$ channels with perfect absorption effectively decouple from the remaining $(n-k)$ channels. The eigenvalues corresponding to these $(n-k)$ channels behave as if they belonged to a separate $(n-k)$-input-port system, subject to the full set of constraints derived in Section~\ref{sec:main_results}.

\subsection{Closed channels, reflectionless scattering modes, and coherent zero absorption}

We now consider another group of important transport phenomena: closed channels, reflectionless scattering modes, and coherent zero absorption. These effects are characterized by the existence of zero transmission, reflection, and absorption eigenvalues, respectively. More specifically, for an $n$-input-port passive linear time-invariant system:
\begin{itemize}
    \item $k$-fold closed channels occur when $\tau_{n-k+1} = \tau_{n-k+2} = \cdots = \tau_n = 0$.
    \item $k$-fold reflectionless scattering modes occur when $\rho_{n-k+1} = \rho_{n-k+2} = \cdots = \rho_n = 0$.
    \item $k$-fold coherent zero absorption occurs when $\alpha_{n-k+1} = \alpha_{n-k+2} = \cdots = \alpha_{n} = 0$.
\end{itemize}

Our goal is to determine the permissible combinations of transmission, reflection, and absorption eigenvalues for systems exhibiting these phenomena. To illustrate our approach, we focus on the case of $k$-fold coherent zero absorption. The analysis for closed channels and reflectionless scattering modes follows a similar pattern due to the symmetry of the problem under the permutation of $T$, $R$, and $A$.

\subsubsection{Example: 2-input-port passive systems with 1-fold coherent zero absorption}

Let us examine 2-input-port passive systems with 1-fold coherent zero absorption (1-CZA). This condition imposes an additional constraint $\alpha_2 = 0$, which further restricts the set $\Omega_2$ given by Eqs.~(\ref{eq:result-1})-(\ref{eq:main_result_ineq}). We obtain the refined set of permissible tuples of eigenvalues:
\begin{align}
\Omega_2|_{\text{1-CZA}} = \{(\tilde{\bm{\tau}},\tilde{\bm{\rho}},\tilde{\bm{\alpha}}) &\in \Omega_2: \alpha_2 = 0 \} \nonumber \\
= \{(\tilde{\bm{\tau}},\tilde{\bm{\rho}},\tilde{\bm{\alpha}}) &\in \mathcal{D}_{2} \times \mathcal{D}_{2} \times \mathcal{D}_{2}: \tilde{\bm{\alpha}} \in \{\bm{\alpha}\}, \tilde{\bm{\rho}} \in \{\bm{\rho}|\tilde{\bm{\alpha}}\}, \tilde{\bm{\tau}} \in \{\bm{\tau}|\tilde{\bm{\rho}},
\tilde{\bm{\alpha}}\},  \\
    \{\bm{\alpha}\} &= \{(\alpha_1,0): \alpha_1 \in [0,1]\}, \nonumber \\
  \{\bm{\rho}|\tilde{\bm{\alpha}}\} &= \{(\rho_1,\rho_2) \in \mathcal{D}_{2}: \rho_2 \leq 1 - \tilde{\alpha}_1\}, \nonumber \\
    \{\bm{\tau}|\tilde{\bm{\rho}},\tilde{\bm{\alpha}}\} &= \{(\tau_1, \tau_2) \in \mathcal{D}_{2}: \nonumber  \\  
 & \tau_{2} \geq 1 - \tilde{\rho}_{1} - \tilde{\alpha}_{1}, \nonumber \\
 & \tau_{1} \geq 1 - \tilde{\rho}_{1} \nonumber \\
 & \tau_{1} \geq 1 - \tilde{\rho}_{2} - \tilde{\alpha}_{1},  \nonumber \\
 & \tau_{1} + \tau_{2} = 2 - \tilde{\rho}_{1} - \tilde{\rho}_{2} -\tilde{\alpha}_{1}  \} \quad \}. \label{eq:1CZA}
\end{align}

\begin{figure}[htbp]
    \centering
    \includegraphics[width=0.6\textwidth]{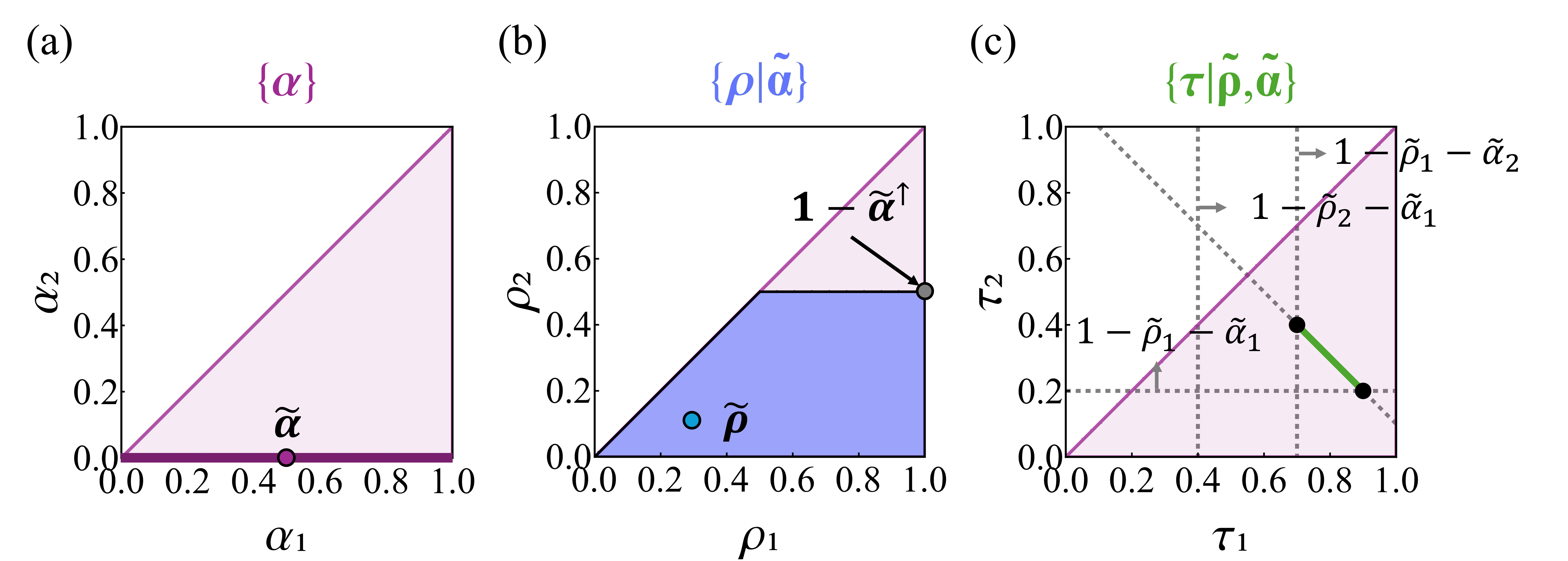}
    \caption{Possible tuples $(\bm{\tau}, \bm{\rho}, \bm{\alpha})$ for 2-input-port passive systems with 1-CZA. (a) The set $\{\bm{\alpha}\}$ (purple interval). (b) The set $\{\bm{\rho}|\tilde{\bm{\alpha}}\}$ for $\tilde{\alpha}_1 = 0.5$ (blue region). (c) The set $\{\bm{\tau}|\tilde{\bm{\rho}},\tilde{\bm{\alpha}}\}$ for $\tilde{\alpha}_1 = 0.5$,  $\tilde{\rho}_1 = 0.3$, and $\tilde{\rho}_2 = 0.1$ (green interval).}
    \label{fig:1CZA-2port}
\end{figure}
Figure \ref{fig:1CZA-2port} illustrates these results. Comparing this with Fig.~\ref{fig:2-port_passive}, we observe that the 1-CZA condition only reduces the dimensionality of the set $\{\bm{\alpha}\}$ by 1. This is in sharp contrast to the 1-CPA case which reduces the dimensionality of all three sets $\{\bm{\alpha}\}$, $ \{\bm{\rho}|\tilde{\bm{\alpha}}\}$, and $\{\bm{\tau}|\tilde{\bm{\rho}},\tilde{\bm{\alpha}}\}$ by 1. Consequently, Eq.~(\ref{eq:1CZA}) cannot be further simplified into a form like Eq.~(\ref{eq:Omega_2_1-CPA-simplified}) for the 1-CPA case.

\subsubsection{General case: n-input-port passive systems with k-fold coherent zero absorption}

We can generalize the observation of the previous section to arbitrary $n$-input-port passive systems with $k$-fold coherent zero absorption ($k$-CZA). The refined set of permissible eigenvalues is:
\begin{equation}
\Omega_n|_{\text{k-CZA}} = \{(\tilde{\bm{\tau}},\tilde{\bm{\rho}},\tilde{\bm{\alpha}}) \in \Omega_n: \tilde{\alpha}_{n-k+1} = \tilde{\alpha}_{n-k+2} = \cdots = \tilde{\alpha}_{n} = 0\},
\label{eq:Omega_n_k-CZA}
\end{equation}
which cannot be further simplified into a form like Eq.~(\ref{eq:Omega_n_k-CPA}) for the $k$-CPA case.

This result reveals a fundamental difference between $k$-CZA and $k$-CPA systems: the $k$ channels with zero absorption cannot be completely decoupled from the remaining $(n-k)$ channels. This difference arises because, in passive systems, zero absorption only implies that the sum of transmission and reflection reaches unity, but this imposes no direct constraints on individual quantities. In contrast, unity absorption implies that both transmission and reflection must reach zero individually.

\section{Discussion and Conclusion}\label{sec:conclusion}

This work provides a comprehensive characterization of passivity constraints on transmission, reflection, and absorption eigenvalues in linear time-invariant systems. Our analysis reveals that the set of permissible eigenvalue combinations forms a convex polyhedron in eigenvalue space, fully described by a trace equality and a set of linear inequalities including Horn's inequalities. This connection between fundamental physics and advanced matrix theory provides new insights into wave transport phenomena.

Our findings apply to all wave types, including optical and acoustic varieties, with significant implications for both theoretical studies and practical applications. From a theoretical perspective, our results provide a new lens through which to view the interplay between passivity and transport properties. The emergence of Horn's inequalities in this context suggests deeper mathematical structures underlying wave physics that warrant further exploration.

In terms of practical applications, the constraints we've derived can guide the design and optimization of passive wave devices across a wide range of fields, including optics, acoustics, and mesoscopic physics. The convex nature of the allowable eigenvalue set is particularly useful for convex optimization techniques~\cite{boyd2004}, potentially informing the development of more efficient photonic and acoustic components.

Several important directions for future research emerge from this work:

1. Symmetry considerations: We have not assumed any additional symmetries of the passive systems. An interesting direction for future research would be to study the implications of additional symmetries~\cite{guo2022c} on the attainable range of transport eigenvalues.

2. Statistical theory: Our results determine the support of the joint probability distribution for the three sets of eigenvalues. This paves the way for a more comprehensive random-matrix theory on these transport eigenvalues, potentially leading to new statistical insights into wave transport in passive systems.

3. Optimization algorithms: Developing efficient algorithms to find optimal eigenvalue combinations within the constrained set for various objectives could be valuable for device design and optimization.

4. Experimental validation: While our theory is mathematically rigorous, experimental validation across different wave systems would further solidify its practical relevance.

In conclusion, our work provides a powerful toolset for analyzing and designing passive wave transport systems. As we continue to push the boundaries of our understanding of wave phenomena, we anticipate that the framework presented here will play a crucial role in future advancements across a wide range of fields, from fundamental physics to cutting-edge technologies.

\begin{acknowledgments}
This work is funded by the U.S. Department of Energy (Grant No. DE-FG02-07ER46426) and by a Simons Investigator in Physics grant from the Simons Foundation (Grant No. 827065).
\end{acknowledgments}

\appendix

\appendix

\section{Detailed Proof of Eq.~(\ref{eq:Omega_n_k-CPA})}\label{appendix:proof_k_CPA}

\begin{proof}
We prove that passivity and the $k$-CPA condition together imply that, under a certain unitary similarity transformation, the $A$, $R$, and $T$ matrices can be expressed in the following block form:
\begin{equation}
UAU^{\dagger} = \begin{pmatrix}
I_k & O \\
O & A_{[n-k]}
\end{pmatrix}, \quad  
URU^{\dagger} = \begin{pmatrix}
O_k & O \\
O & R_{[n-k]}
\end{pmatrix}, \quad 
UTU^{\dagger} = \begin{pmatrix}
O_k & O \\
O & T_{[n-k]}
\end{pmatrix}.
\label{eq:block-form}
\end{equation}
Here, $U$ is an $n\times n$ unitary matrix, $I_k$ and $O_k$ are the $k\times k$ identity and zero matrices respectively, $A_{[n-k]}$, $R_{[n-k]}$, and $T_{[n-k]}$ are $(n-k)\times (n-k)$ matrices, and $O$ denotes zero matrices of appropriate sizes.

By the $k$-CPA condition, $A$ has $k$ eigenvalues equal to 1:
\begin{equation}
\tilde{\alpha}_1 = \tilde{\alpha}_2 = \cdots = \tilde{\alpha}_k = 1,
\label{eq:k-CPA-condition}
\end{equation}
with corresponding orthonormal eigenvectors $\ket{\psi_1}, \ket{\psi_2}, \ldots, \ket{\psi_k}$. We extend these to form an orthonormal basis $\{\ket{\psi_i}\}_{i=1}^n$ of the entire space. Let $\{\ket{e_i}\}_{i=1}^n$ be the standard basis. We define the unitary matrix $U = [U_{ij}]$ by:
\begin{equation}
U_{ij} = \braket{\psi_i|e_j}.
\label{eq:U-definition}
\end{equation}
Under the unitary similarity transformation induced by $U$, $A$ takes the block form:
\begin{equation}
UAU^{\dagger} = \begin{pmatrix}
I_k & B \\
B^{\dagger} & A_{[n-k]}
\end{pmatrix},
\label{eq:A-block-form}
\end{equation}
where $B$ is a $k\times (n-k)$ complex matrix. We now prove that $B = O$. Consider the matrix:
\begin{equation}
U(I-A)U^{\dagger} = \begin{pmatrix}
O_k & -B \\
-B^{\dagger} & I_{n-k} - A_{[n-k]}
\end{pmatrix}.
\label{eq:I-A-block-form}
\end{equation}
Since $I-A = T + R$ and both $T$ and $R$ are positive semidefinite by passivity, we have $I-A \geq O$. This implies $U(I-A)U^{\dagger} \geq O$. For a positive semidefinite matrix $H = [h_{ij}]$, if $h_{kk} = 0$ for some $k$, then $h_{ik} = h_{ki} = 0$ for all $i$ (see Ref.~\cite{horn2017}, p.432, Observation 7.1.10). Applying this to $U(I-A)U^{\dagger}$, we conclude that $B = O$. Thus, $A$ takes the form in Eq.~(\ref{eq:block-form}). 

Now, consider $T$ and $R$. We can express $UTU^{\dagger}$ and $URU^{\dagger}$ in block form:
\begin{equation}
UTU^{\dagger} = \begin{pmatrix}
T_{[k]} & C \\
C^{\dagger} & T_{[n-k]}
\end{pmatrix}, \quad 
URU^{\dagger} = \begin{pmatrix}
R_{[k]} & D \\
D^{\dagger} & R_{[n-k]}
\end{pmatrix}.
\label{eq:TR-block-form}
\end{equation}
The passivity condition implies $T_{[k]} \geq O$ and $R_{[k]} \geq O$. From Eq.~(\ref{eq:I-A-block-form}), we have:
\begin{equation}
U(T+R)U^{\dagger} = U(I-A)U^{\dagger} = \begin{pmatrix}
O_k & O \\
O & I_{n-k} - A_{[n-k]}
\end{pmatrix}.
\label{eq:TR-sum-block-form}
\end{equation}
This implies $T_{[k]} + R_{[k]} = O_k$. Given that $T_{[k]} \geq O$ and $R_{[k]} \geq O$, we must have $T_{[k]} = R_{[k]} = O_k$. We can use a similar argument for $B=O$ and show that $C = D = O$. Thus, $T$ and $R$ take the forms in Eq.~(\ref{eq:block-form}).

From Eq.~(\ref{eq:block-form}), we can deduce that the eigenvalues of $A$, $R$, and $T$ take the form:
\begin{equation}
\tilde{\bm{\alpha}} = (\bm{1}_k, \tilde{\bm{\alpha}}_{[n-k]}), \quad
\tilde{\bm{\rho}} = (\tilde{\bm{\rho}}_{[n-k]}, \bm{0}_k), \quad
\tilde{\bm{\tau}} = (\tilde{\bm{\tau}}_{[n-k]}, \bm{0}_k),
\label{eq:eigenvalue-form}
\end{equation}
where $\tilde{\bm{\alpha}}_{[n-k]}$, $\tilde{\bm{\rho}}_{[n-k]}$, and $\tilde{\bm{\tau}}_{[n-k]}$ are the eigenvalues of $A_{[n-k]}$, $R_{[n-k]}$, and $T_{[n-k]}$ respectively.

From the block-diagonal form in Eq.~(\ref{eq:block-form}),  we obtain the passivity condition:
\begin{equation}
A \geq O \iff A_{[n-k]} \geq O, \quad T \geq O \iff T_{[n-k]} \geq O, \quad R \geq O \iff R_{[n-k]} \geq O,
\label{eq:ATR-positivity}
\end{equation}
and the energy conservation condition:
\begin{equation}
A + R + T = I \iff A_{[n-k]} + R_{[n-k]} + T_{[n-k]} = I_{n-k}.
\label{eq:energy-conservation-reduced}
\end{equation}
Equations (\ref{eq:ATR-positivity}) and (\ref{eq:energy-conservation-reduced}) show that $A_{[n-k]}$, $R_{[n-k]}$, and $T_{[n-k]}$ satisfy the same conditions as $A$, $R$, and $T$ for an $(n-k)$-port system. Therefore, the set of all attainable tuples $(\tilde{\bm{\tau}}_{[n-k]}, \tilde{\bm{\rho}}_{[n-k]}, \tilde{\bm{\alpha}}_{[n-k]})$ is identical to $\Omega_{n-k}$. This completes the proof of Eq.~(\ref{eq:Omega_n_k-CPA}).
\end{proof}

\bibliography{main}

\end{document}